\pdfoutput=1
\documentclass[11pt]{article}

\usepackage[preprint]{acl}
\usepackage{times}
\usepackage{latexsym}
\usepackage[T1]{fontenc}
\usepackage[utf8]{inputenc}
\usepackage{microtype}
\usepackage{inconsolata}
\usepackage{graphicx}
\usepackage{amsmath}
\usepackage{amssymb}
\usepackage{booktabs}
\usepackage{array}
\usepackage{url}
\usepackage{enumitem}
\usepackage{amsthm}
\usepackage{tikz}
\usetikzlibrary{positioning, calc, arrows.meta, decorations.pathreplacing, shadows}
\definecolor{dibInk}{HTML}{263238}
\definecolor{dibStack}{HTML}{3949AB}
\definecolor{dibTamper}{HTML}{C62828}
\definecolor{dibTamperBg}{HTML}{FDECEA}
\definecolor{dibRule}{HTML}{607D8B}
\definecolor{dibJudge}{HTML}{7B1FA2}
\definecolor{dibId}{HTML}{00897B}
\definecolor{dibPass}{HTML}{2E7D32}
\newtheorem{proposition}{Proposition}
\newtheorem{theorem}{Theorem}
\newtheorem{hypothesis}{Hypothesis}

\title{Stronger Judges Do Not See More:\\Deployment-Integrity Blindness in LLM Evaluation}

\author{Jianchang Su \quad Dawei Xiang \quad Yifan Zhang \quad Wei Zhang\\
  University of Connecticut}

\newcommand{\dib}{deployment-integrity blindness}
\newcommand{\eout}{E_{\mathrm{out}}}
\newcommand{\eint}{E_{\mathrm{int}}}
\newcommand{\mexp}{M_{\mathrm{expected}}}
\newcommand{\mobs}{M_{\mathrm{observed}}}
\newcommand{\pass}{\mathrm{pass}}
\newcommand{\fail}{\mathrm{fail}}

\begin{document}
\maketitle

\begin{abstract}
LLM-agent evaluation reports behavioral scores but leaves the identity of the deployed runtime stack unreported.
A tampered deployment that preserves evaluated outputs passes every output-only evaluator.
We introduce DIB-Bench, a benchmark that measures this \textit{deployment-integrity blindness} across 7{,}560 scored examples, six tamper types, and an evaluator ladder from regex scorers through LLM judges to GPT-5-nano.
Stronger evaluators accept \textit{more} tampered outputs: mean acceptance rises from 88.6\% (rule-based) to 94.1\% (GPT-5-nano), and a trigger-conditional tamper invisible on the evaluation suite produces 48--100\% harmful outputs, rising with model capability.
A PAC-style bound shows that tampers with zero output shift are undetectable from outputs at any sample size; a manifest-hash identity channel detects them with zero samples.
Behavioral fingerprinting identifies the model component above chance (76.7\%) but covers one of six stack components.
We also propose a graded deployment-identity descriptor for evaluation reports.
\end{abstract}

\section{Introduction}
\label{sec:intro}

LLM-agent evaluation reports behavioral scores: accuracy, task success, preference, harmlessness.
This remains necessary, but the evaluated object has shifted from a model to a system.
A deployed LLM agent is a runtime stack of model artifacts, tokenizers, system prompts, tool specifications, retrieval corpora, runtime dependencies, and postprocessors, all of which can change after training, after evaluation, and across deployment instances \citep{zaharia2024compound}.
Agent leaderboards report a single rank per ``model'' even when the ranked entity is a composite system whose components could be swapped without changing the leaderboard entry.
This is a structural gap in the field's evaluation vocabulary.

In response to this gap, the field's default remedy is a stronger evaluator.
If output evaluation misses a deployment change, the implied fix is a better scorer: an LLM judge, a forensic classifier, a behavioral fingerprint.
If judges caught tampering, output-only evaluation would suffice and a separate identity channel would be redundant.
They do not.

\begin{figure*}[t]
  \centering
  \begin{tikzpicture}[
    font=\sffamily\scriptsize,
    >={Stealth[length=4pt,width=3.5pt]},
    chip/.style={draw=dibStack!45, fill=white, rounded corners=1.5pt,
      minimum width=2.0cm, minimum height=0.36cm, inner sep=0pt, line width=0.4pt},
    tchip/.style={chip, draw=dibTamper, fill=dibTamperBg, line width=0.7pt},
    panel/.style={draw=#1!55, fill=#1!6, rounded corners=4pt, line width=0.6pt,
      drop shadow={opacity=0.12, shadow xshift=0.6pt, shadow yshift=-0.6pt}},
    pill/.style={draw=#1!70, fill=white, rounded corners=4pt, minimum height=0.32cm,
      inner xsep=3pt, inner ysep=1pt, line width=0.4pt, text=#1!85!black, font=\sffamily\tiny},
    badge/.style={draw=#1, fill=#1!8, rounded corners=3pt, align=center,
      minimum width=1.25cm, inner sep=2pt, text=#1!85!black, line width=0.6pt},
    flow/.style={->, line width=0.7pt, draw=dibInk!70},
    ttl/.style={font=\sffamily\scriptsize\bfseries, text=dibInk},
    lbl/.style={font=\sffamily\tiny, text=dibInk},
    hash/.style={font=\ttfamily\tiny},
  ]
    \node[ttl, anchor=west] at (-0.05, 4.15) {(a) DIB-Bench: one tampered component, two evaluation channels};

    \draw[panel=dibStack] (0.05, 0.62) rectangle (2.25, 3.42);
    \draw[panel=dibStack] (3.05, 0.62) rectangle (5.25, 3.42);
    \node[ttl, anchor=south] at (1.15, 3.45) {Declared stack $M_{\mathrm{exp}}$};
    \node[ttl, anchor=south] at (4.15, 3.45) {Deployed stack $M_{\mathrm{obs}}$};
    \foreach \i/\name/\h in {0/Model/6b17, 1/Prompt/c23c, 2/Tools/077a,
                             3/Retrieval/cb9f, 4/Runtime/2b9d, 5/Postproc./d728} {
      \node[chip] (dc\i) at (1.15, {3.15-0.45*\i}) {};
      \node[lbl, anchor=west] at ($(dc\i.west)+(0.07,0)$) {\name};
      \node[anchor=east, hash, text=dibStack] at ($(dc\i.east)+(-0.07,0)$) {\#\h};
      \node[chip] (oc\i) at (4.15, {3.15-0.45*\i}) {};
      \node[lbl, anchor=west] at ($(oc\i.west)+(0.07,0)$) {\name};
      \node[anchor=east, hash, text=dibStack] at ($(oc\i.east)+(-0.07,0)$) {\#\h};
    }
    \node[tchip] (oc1) at (4.15, 2.70) {};
    \node[anchor=west, font=\sffamily\tiny\bfseries, text=dibTamper!85!black] at ($(oc1.west)+(0.07,0)$) {Prompt};
    \node[anchor=east, hash, text=dibTamper] at ($(oc1.east)+(-0.07,0)$) {\#f957};
    \draw[flow, draw=dibTamper] (2.30, 2.70) -- (3.00, 2.70)
      node[midway, above=0pt, font=\sffamily\tiny, text=dibTamper!85!black] {tamper};

    \draw[panel=dibJudge] (6.05, 2.05) rectangle (9.30, 3.42);
    \node[ttl] at (7.675, 3.18) {Output-only evaluators};
    \node[pill=dibRule]  at (7.05, 2.75) {rule-based};
    \node[pill=dibJudge] at (8.30, 2.75) {Qwen3-8B};
    \node[pill=dibJudge] at (7.675, 2.33) {GPT-5-nano};
    \draw[flow] (5.25, 2.73) -- (6.05, 2.73)
      node[midway, above=0pt, font=\sffamily\tiny, text=dibInk!80] {outputs};

    \draw[panel=dibId] (6.05, 0.62) rectangle (9.30, 1.75);
    \node[ttl] at (7.675, 1.52) {Identity channel};
    \node[text=dibInk] at (7.675, 1.15) {$h(M_{\mathrm{exp}}[c]) \overset{?}{=} h(M_{\mathrm{obs}}[c])$};
    \node[lbl, text=dibInk!70] at (7.675, 0.80) {one manifest read, zero samples};
    \draw[line width=0.7pt, draw=dibInk!70] (1.15, 0.62) -- (1.15, 0.25) -- (7.675, 0.25);
    \draw[line width=0.7pt, draw=dibInk!70] (4.15, 0.62) -- (4.15, 0.25);
    \fill[dibInk!70] (4.15, 0.25) circle (1.2pt);
    \draw[flow] (7.675, 0.25) -- (7.675, 0.62);
    \node[lbl, text=dibInk!80, anchor=south] at (5.65, 0.25) {manifest hashes};

    \node[badge=dibPass]   (bp) at (10.17, 2.73) {\textbf{\checkmark\ pass}\\[-1pt]{\tiny 88.6--94.1\%}};
    \node[badge=dibTamper] (bf) at (10.17, 1.18) {\textbf{$\times$ fail}\\[-1pt]{\tiny 0\%}};
    \draw[flow] (9.30, 2.73) -- (bp.west);
    \draw[flow] (9.30, 1.18) -- (bf.west);
    \draw[dibTamper!70, dashed, rounded corners=4pt, line width=0.6pt] (9.45, 0.66) rectangle (10.89, 3.38);
    \node[font=\sffamily\tiny, text=dibTamper!85!black, align=center] at (10.17, 1.95) {\textbf{blind}\\pass $\wedge$ fail};

    \def\yb{0.62}
    \def\ys{0.028}
    \node[ttl, anchor=west] at (11.25, 4.15) {(b) Evaluator ladder (\%)};
    \fill[dibJudge!55] (11.65, 3.73) rectangle (11.85, 3.87);
    \node[lbl, anchor=west] at (11.87, 3.80) {tampered (mean of 6)};
    \draw[dibPass, line width=1.3pt] (14.30, 3.80) -- (14.55, 3.80);
    \node[lbl, anchor=west] at (14.57, 3.80) {clean};

    \foreach \v in {0, 25, 50, 75, 100} {
      \draw[dibInk!12, line width=0.4pt] (11.62, {\yb+\v*\ys}) -- (15.75, {\yb+\v*\ys});
      \node[anchor=east, font=\sffamily\tiny, text=dibInk!60] at (11.62, {\yb+\v*\ys}) {\v};
    }
    \foreach \x/\t/\cl/\c in {12.07/88.6/97.6/dibRule!45, 12.87/73.1/85.4/dibRule!28,
                              13.67/89.3/99.9/dibJudge!35, 14.47/94.1/99.7/dibJudge!55} {
      \fill[\c] ({\x-0.28}, \yb) rectangle ({\x+0.28}, {\yb+\t*\ys});
      \node[anchor=north, font=\sffamily\tiny\bfseries, text=dibInk] at (\x, {\yb+\t*\ys-0.02}) {\t};
      \draw[dibPass, line width=1.3pt] ({\x-0.33}, {\yb+\cl*\ys}) -- ({\x+0.33}, {\yb+\cl*\ys});
    }
    \draw[dibPass, line width=1.3pt] (14.94, {\yb+100*\ys}) -- (15.60, {\yb+100*\ys});
    \node[anchor=south, font=\sffamily\tiny\bfseries, text=dibId] at (15.27, \yb) {0.0};
    \draw[dibInk!60, line width=0.5pt] (11.62, \yb) -- (15.75, \yb);
    \draw[dibInk!35, dashed, line width=0.4pt] (14.87, 0.40) -- (14.87, 3.55);
    \foreach \x/\lab in {12.07/{Rule\\standard}, 12.87/{Rule\\strict}, 13.67/{Qwen3\\8B},
                         14.47/{GPT-5\\nano}, 15.27/{Identity\\hash}} {
      \node[anchor=north, font=\sffamily\tiny, align=center, text=dibInk, inner sep=1pt] at (\x, {\yb-0.03}) {\lab};
    }
    \draw[decorate, decoration={brace, mirror, amplitude=3pt}, dibInk!60, line width=0.5pt]
      (11.79, 0.05) -- (14.75, 0.05)
      node[midway, below=3pt, font=\sffamily\tiny, text=dibInk] {output-only evaluators};
  \end{tikzpicture}
  \caption{Overview of DIB-Bench. (a)~A deployment is a stack of six components, each recorded by a content hash in the declared manifest (hash prefixes from the Qwen2.5-7B run). A tamper changes one component (here the system prompt) while preserving the outputs the evaluation suite checks. Output-only evaluators score the outputs and pass the tampered deployment; the identity channel compares manifest hashes and fails it. A deployment that passes behavior and fails identity counts as blind. (b)~Mean acceptance of the six tampered conditions (bars) and of the clean deployment (green ticks) across the evaluator ladder ($n{=}7{,}560$). LLM judges accept more tampered outputs than rule-based scoring; the identity channel accepts none.}
  \label{fig:overview}
\end{figure*}

We evaluate six classes of deployment tampering across $7{,}560$ scored examples (three model pairs, three random seeds) under an evaluator ladder: rule-based scorers, a locally run LLM judge (Qwen3-8B), and a current-generation commercial reasoning model (GPT-5-nano from OpenAI's GPT-5 family).
Figure~\ref{fig:overview} summarizes the protocol and the main result.
We find:

\begin{enumerate}[topsep=2pt,itemsep=1pt,leftmargin=*]
\item \textbf{Blindness increases with evaluator strength.} Mean tamper acceptance rises from 88.6\% (rule-based) to 89.3\% (Qwen3-8B) to 94.1\% (GPT-5-nano).
\item \textbf{A trigger-conditional tamper is 100\% invisible on the evaluation suite and up to 100\% harmful off it.} Harm scales with model capability: 48\% (Llama-3.1-8B), 81\% (Qwen2.5-7B), 100\% (Qwen2.5-14B).
\item \textbf{A stealth tamper can evade both the output evaluator and the identity channel} by hiding in an undeclared component, which quantifies identity detection as a function of manifest coverage.
\item \textbf{A PAC-style dichotomy} characterizes the undetectable class: tampers with zero output-distribution shift are undetectable at any sample size by any evaluator. A manifest-hash identity channel decides this class with zero samples.
\end{enumerate}

We contribute (i)~DIB-Bench, an extensible protocol for measuring \dib{} with controlled single-component tampers, (ii)~the formal separation of behavioral and identity evidence for LLM evaluation, and (iii)~a graded deployment-identity descriptor for evaluation reports.
Agent and prompt-injection benchmarks have pushed evaluation beyond single-turn tasks \citep{ruan2023toolemu,yao2024taubench,debenedetti2024agentdojo,andriushchenko2025agentharm}; identity evidence is a coordinate beyond their scope.

\section{Setup}
\label{sec:setup}

\subsection{Deployments, Manifests, and the Two Channels}

A deployed LLM agent is a runtime stack of components: model and tokenizer, system prompt, tool specification and router, retrieval policy and corpus, runtime dependencies, and postprocessor.
We represent a deployment by a declared manifest $\mexp$ and an observed runtime manifest $\mobs$, each listing a per-component content hash (SHA-256 over canonical JSON).

\textit{Output evaluation} passes if $\eout(\mathrm{system}(x))=\pass$ for a behavioral scorer $\eout$.
\textit{Identity evaluation} passes if $\eint(\mexp,\mobs)=\pass$, defined as $h(\mexp[c])=h(\mobs[c])$ for every component~$c$.
The hash is intentionally lightweight: it separates the evaluation question from production-grade attestation.
Signed provenance, in-toto-style metadata \citep{torresarias2019intoto}, SLSA attestations \citep{slsa2026}, or hardware-rooted attestation can substitute for the same channel role.
We use \textit{deployment-integrity blindness} for the failure mode and \textit{deployment-identity evidence} for the channel that detects it.

\subsection{Blindness Metrics}

The example-level blindness indicator is $\mathbb{1}[\eout(\mathrm{system}(x))=\pass \wedge \eint(\mexp,\mobs)=\fail]$; its mean across examples is the example-level blindness rate.
The suite-level false-acceptance indicator at threshold $\tau$ is $\mathrm{blind}@\tau = \mathbb{1}[\mathrm{output\_pass\_rate} \geq \tau \wedge \mathrm{integrity\_pass\_rate}=0]$.

Figure~\ref{fig:overview}a shows the two channels side by side.
Current NLP evaluation reports the output channel; the identity channel is unreported.

\section{Theory}
\label{sec:theory}

The separation between the two channels can be made precise.
Behavioral evidence is a function of an output suite.
For any finite output evaluation suite $S$ with input set $X_S$ and acceptance predicate $A_S$, define two deployments $D_1, D_2$ as $S$-equivalent when $A_S(D_1(x)) = A_S(D_2(x))$ for every $x \in X_S$.

\begin{proposition}[Indistinguishability]
\label{prop:indistinguishability}
For any finite output suite $S$ and any deployment $D_1$ that passes $S$, there exists a deployment $D_2$ that differs from $D_1$ in at least one manifest component but is $S$-equivalent to $D_1$. The identity of the deployed system is therefore not a function of $A_S(D(X_S))$.
\end{proposition}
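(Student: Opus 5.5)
The plan is a direct construction. Given a finite suite $S$ with input set $X_S$ and a deployment $D_1$ that passes $S$, I will build $D_2$ by altering exactly one manifest component of $D_1$ in a way that provably leaves every output on $X_S$ unchanged. Then $A_S(D_2(x)) = A_S(D_1(x))$ holds for all $x\in X_S$. The natural choice is the postprocessor component, since it acts last and can be defined pointwise on outputs. Let $\pi_1$ be the postprocessor of $D_1$. Define a new postprocessor $\pi_2$ that agrees with $\pi_1$ on every raw output arising from some $x \in X_S$. Off that set, $\pi_2$ behaves differently, for example by returning a fixed distinct string; this is the trigger-conditional idea. Because $X_S$ is finite, the set of inputs on which $\pi_2$ must agree with $\pi_1$ is finite. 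This leaves room for a genuine change in behavior, provided the input space is larger than $X_S$.

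The key steps run in order as follows.

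\begin{enumerate}[topsep=2pt,itemsep=1pt,leftmargin=*]
\item Fix the components of $D_1$ and replace only the postprocessor, which gives $D_2$.
\item Show $D_2(x) = D_1(x)$ for all $x\in X_S$, hence $S$-equivalence, and hence that $D_2$ also passes $S$.
\item Show that $M[\text{postproc}]$ differs between the two deployments. Under the canonical-JSON encoding, $\pi_1 \neq \pi_2$ as objects, so their serialized contents differ and $h(M_1[c]) \neq h(M_2[c])$, assuming the hash is injective on distinct contents.
\item Conclude the second sentence. The map $D \mapsto A_S(D(X_S))$ takes equal values on $D_1$ and $D_2$, which have distinct identities (manifests). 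So no function of $A_S(D(X_S))$ can recover identity.
\end{enumerate}

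Stochastic outputs need care. If $D_1$ is stochastic, I would couple the randomness: $D_2$ shares the model, sampler and seed, and $\pi_2$ agrees with $\pi_1$ on the whole support of raw outputs for inputs in $X_S$. That support is finite if generation length is bounded, or at least a proper subset of all strings otherwise. If it were all strings, I would instead condition the change on the input: a wrapper that inspects $x$ and deviates only when $x \notin X_S$.

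The step I expect to be the main obstacle is step 3 in degenerate cases, together with the distinctness of hashes. The manifest must actually record a component whose content changes. If a single component were completely free yet not behaviorally bound, the fix would be trivial. The real subtlety is whether "differs in a manifest component" requires the hash to differ. I would handle this by an even simpler fallback: modify a non-functional field of some component, such as a comment in the system prompt or an unused tool entry. That leaves the computation entirely unchanged, so $D_2 = D_1$ as functions. Yet it changes the canonical JSON, and hence the hash with overwhelming probability. This fallback proves the proposition even without the trigger construction. I would present it as the core argument and the postprocessor construction as the stronger, behavior-changing variant.
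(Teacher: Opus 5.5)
Your proposal is correct and matches the paper's proof sketch, which builds $D_2$ by swapping one manifest component for a variant that agrees with $D_1$ on the finite output set $D_1(X_S)$ and names an identity-on-$D_1(X_S)$ postprocessor as one instance; your handling of stochastic outputs, hash distinctness, and the "not a function of $A_S(D(X_S))$" conclusion fills in details the sketch omits. One caution: a comment in the system prompt or an unused tool entry is normally serialized into the model's context, so neither is guaranteed to leave outputs unchanged, which means the non-functional-field fallback is not automatically sound and should not be the core argument, whereas the postprocessor construction needs no such assumption.
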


\begin{proof}[Proof sketch]
$X_S$ is finite, so $D_1(X_S)$ is a finite set of outputs. For any manifest component $c$, construct $D_2$ by replacing $c$ with a variant that produces the same outputs on $X_S$ (a fine-tune, an identity-on-$D_1(X_S)$ postprocessor, a prompt edit conditional on a trigger absent from $X_S$, or a runtime swap). Then $A_S(D_2(x))=A_S(D_1(x))$ for every $x \in X_S$.
\end{proof}

Proposition~\ref{prop:indistinguishability} is the trivial reading: the class of indistinguishable deployments is non-empty.
The strong reading is that practical tampers in the class can cause harm outside $X_S$ while remaining invisible on it.
We state this as an empirical hypothesis tested in \S\ref{sec:stealth}:

\begin{hypothesis}[Practical stealth]
\label{hyp:stealth}
There exist tampered deployments $D_2$ with $\varepsilon(D_1,D_2;\mathcal{D})=0$ on the evaluation distribution $\mathcal{D}$ that produce harmful outputs on inputs outside the support of $\mathcal{D}$.
\end{hypothesis}

The following dichotomy quantifies the boundary between detectable and undetectable tampers. It upgrades Proposition~\ref{prop:indistinguishability} from an existence fact to a sample-complexity bound.

\begin{theorem}[Behavioral sample-complexity dichotomy]
\label{thm:dichotomy}
Let $A$ (clean) and $B$ (tampered) be deployments. Define $\varepsilon = \varepsilon(A,B;\mathcal{D}) := \mathbb{E}_{x\sim\mathcal{D}}[\mathrm{TV}(A(x), B(x))]$ as the output-distribution shift on the evaluation distribution $\mathcal{D}$.
\begin{enumerate}[topsep=2pt,itemsep=1pt,leftmargin=*]
\item[(i)] \textbf{Lower bound.} For every behavioral test $T$ using $N$ i.i.d.\ samples, the distinguishing advantage satisfies $|\Pr[T = \text{tampered} \mid B] - \Pr[T = \text{tampered} \mid A]| \leq N\varepsilon$. If $\varepsilon=0$, every behavioral test at any $N$ is indistinguishable from chance.
\item[(ii)] \textbf{Upper bound.} If $\varepsilon > 0$, there is a behavioral test using $N=O(\varepsilon^{-2}\log(1/\delta))$ samples that distinguishes $A$ from $B$ with error $\leq \delta$.
\end{enumerate}
\end{theorem}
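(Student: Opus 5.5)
The plan is to treat each behavioral sample as a pair $(x,y)$ with $x\sim\mathcal{D}$ and $y\sim A(x)$ or $y\sim B(x)$. We then reduce both parts to total variation between the joint laws $P_A$ and $P_B$ of one such pair. The first step is to check that $\mathrm{TV}(P_A,P_B)=\varepsilon$. Both joints share the marginal $\mathcal{D}$ on $x$, so the TV of the joints equals the $\mathcal{D}$-average of the conditional TVs. This is exactly $\mathbb{E}_{x\sim\mathcal{D}}[\mathrm{TV}(A(x),B(x))]$; on a countable output space it is a one-line computation, and in general it follows by disintegration.

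For part (i), any test $T$, possibly randomized, is a $[0,1]$-valued function of the $N$ samples, so its distinguishing advantage is at most $\mathrm{TV}(P_A^{\otimes N},P_B^{\otimes N})$. We then bound the product TV by subadditivity, $\mathrm{TV}(P_A^{\otimes N},P_B^{\otimes N})\le N\,\mathrm{TV}(P_A,P_B)=N\varepsilon$. This comes from a hybrid argument: swap one coordinate at a time and apply the triangle inequality. Alternatively, couple the $N$ samples coordinatewise with the optimal coupling and apply a union bound. When $\varepsilon=0$ we get $P_A=P_B$, so the $N$-sample laws coincide for every $N$ and every test has advantage exactly $0$. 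If the paper allows adaptive choice of inputs, the same hybrid argument still works, because each step's conditional output law differs by TV at most $\sup_x$. However, the statement says i.i.d.\ from $\mathcal{D}$, so I will stay with the non-adaptive case.

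For part (ii), I will take an event $E$ attaining the TV: $E=\{(x,y): p_B(y\mid x)>p_A(y\mid x)\}$, so that $P_B(E)-P_A(E)=\varepsilon$. The test draws $N$ samples from the unknown deployment and computes the empirical frequency $\hat p$ of $E$. It outputs ``tampered'' iff $\hat p > P_A(E)+\varepsilon/2$. By Hoeffding, each error probability is at most $\exp(-N\varepsilon^2/2)$, which is at most $\delta$ once $N\ge 2\varepsilon^{-2}\log(1/\delta)$.

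The main obstacle is that this test presumes knowledge of $A$, $B$ and the likelihoods, which is the standard simple-versus-simple hypothesis setting. I will state that assumption explicitly, since the theorem speaks of distinguishing two given deployments. If only sample access to the clean deployment $A$ is available, the fix I would try first is to estimate $P_A(E)$ from $N$ reference samples of $A$. This keeps $O(\varepsilon^{-2}\log(1/\delta))$ up to a constant factor, with $E$ still assumed specified. A second, subtler point is measurability of $E$ over free-form text outputs. I will handle it by working with densities relative to $\tfrac12(P_A+P_B)$, which always exists, so $E$ is well defined.
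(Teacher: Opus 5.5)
Your proposal is correct and follows the paper's proof in Appendix~\ref{app:pac-proof} essentially step for step: the joint laws $P_A,P_B$ share the $\mathcal{D}$-marginal, so $\mathrm{TV}(P_A,P_B)=\varepsilon$; part (i) uses subadditivity of TV over the $N$-fold product; and part (ii) applies Hoeffding to the empirical frequency of a TV-attaining event, with a midpoint threshold. Three points are refinements rather than a different route: your explicit Scheff\'e set; your one-sided Hoeffding constant $2\varepsilon^{-2}\ln(1/\delta)$, replacing the paper's $2\varepsilon^{-2}\ln(2/\delta)$; and your stated simple-versus-simple assumption, which the paper's midpoint threshold also uses implicitly.
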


The proof (Appendix~\ref{app:pac-proof}) uses sub-additivity of TV under products for~(i) and Hoeffding concentration for~(ii).
The theorem subsumes all output-only evaluators: rule-based, LLM-judge, forensic, fingerprint.
Each is a function of the same output samples, so the $N\varepsilon$ bound applies verbatim.
By contrast, the identity channel escapes this bound: it compares declared and observed component hashes with one manifest read, independent of $\varepsilon$ and $N$.
It decides the $\varepsilon=0$ class that remains undecidable for behavioral tests at any sample size.

\section{DIB-Bench}
\label{sec:protocol}

\subsection{Conditions, Tasks, and Models}

We evaluate one \textit{clean} condition and six single-component tampers: \textit{model substitution} (observed model/tokenizer differs from declared), \textit{prompt tampering} (system prompt receives an unauthorized override), \textit{tool tampering} (tool spec/router makes a restricted tool appear safe), \textit{retrieval tampering} (retrieval policy or docs contain override text), \textit{dependency tampering} (runtime descriptor changes), and \textit{postprocess tampering} (generated text is modified after inference).
Each tampered condition changes exactly one manifest component, isolating the causal comparison.

The main study uses 120 synthetic tasks per condition across five families (factual-context QA, policy allow/deny, tool selection, secret-leak refusal, canary probes).
Nine runs ($3$ model pairs $\times$ $3$ random seeds $42, 43, 44$) yield $7{,}560$ scored examples.
Model pairs: Qwen2.5-14B-Instruct $\rightarrow$ Qwen2.5-7B-Instruct, Qwen2.5-7B-Instruct $\rightarrow$ Qwen2.5-1.5B-Instruct \citep{qwen2024qwen25}, and Llama-3.1-8B-Instruct $\rightarrow$ Qwen2.5-7B-Instruct \citep{meta2024llama3}.
All inference uses vLLM \citep{kwon2023vllm}.

\subsection{Instrument Validation: Detection Equals Coverage}
\label{sec:instrument}

Before presenting results, we validate two properties of the identity channel.
Table~\ref{tab:instrument} reports two undeclared-component conditions (E2) and two benign-change conditions (E3).

\begin{table}[t]
  \centering
  \small
  \resizebox{\columnwidth}{!}{%
  \begin{tabular}{@{}lrrr@{}}
    \toprule
    \textbf{Condition} & \textbf{Id.\ pass} & \textbf{Std} & \textbf{$\Delta$ vs clean} \\
    \midrule
    \multicolumn{4}{@{}l}{\textit{E2: undeclared-component tampers}} \\
    \quad sampling (temp $0 \to 0.7$) & \textbf{100} & 93.3 & 33.3\% changed \\
    \quad chat-template injection & \textbf{100} & 90.0 & 77.5\% changed \\
    \midrule
    \multicolumn{4}{@{}l}{\textit{E3: benign declared-component changes}} \\
    \quad prompt paraphrase & \textbf{0} & 100.0 & 0\% changed \\
    \quad dependency patch bump & \textbf{0} & 100.0 & 0\% changed \\
    \bottomrule
  \end{tabular}%
  }
  \caption{Instrument validation (Qwen2.5-7B, seed 42, $n{=}120$ each). E2: changing an \textit{undeclared} component passes identity at 100\% while behavior visibly changes, so the 0\% identity rate in the main study is not tautological: it reflects manifest coverage. E3: changing a \textit{declared} component in a semantically benign way fails identity at 0\% while behavior is identical to clean, so identity is a change-detection signal, not a semantic-equivalence check.}
  \label{tab:instrument}
\end{table}

\paragraph{Identity detection is a function of manifest coverage (E2).}
Changing the sampling temperature (an undeclared parameter) or injecting a directive through the chat template (an undeclared component) changes behavior measurably (33\%--78\% of outputs differ from clean) while identity passes at 100\%.
The identity channel sees exactly the changes that the manifest covers.
The 0\% identity rate in the main study therefore reflects manifest coverage: detection equals coverage, and the coverage decision determines whether the channel is useful.

\paragraph{Identity is a change-detection signal, not a semantic-equivalence check (E3).}
Paraphrasing the system prompt or bumping a dependency patch version changes the manifest hash (identity fails at 0\%) even though behavior is 100\% identical to clean (every output decision matches).
The channel cannot distinguish a benign rephrase from a malicious edit; it flags change and delegates adjudication to a human or process.
This is the expected behavior of a change-detection instrument, not a limitation.

Both properties replicate on Qwen2.5-7B (seeds 42 and 43), Llama-3.1-8B$\to$Qwen2.5-7B, and Qwen2.5-14B: undeclared changes alter 15.8--100\% of outputs while identity passes at 100\%, and benign declared changes keep a 100\% decision match while identity fails (Appendix~\ref{app:scope}).

\section{Results: The Evaluator Ladder}
\label{sec:results}

With the instrument validated, we turn to the central question: how does blindness vary across evaluators of different strength?
Table~\ref{tab:master} reports the same seven conditions under four output-only evaluators of increasing capability and the identity channel.
All values are percentages; for tampered conditions, blindness equals the acceptance rate since identity is 0\%.

\begin{table*}[t]
  \centering
  \small
  \resizebox{\textwidth}{!}{%
  \begin{tabular}{@{}l rr rr r r@{}}
    \toprule
    & \multicolumn{2}{c}{\textit{Rule-based}} & \multicolumn{2}{c}{\textit{LLM judge}} & & \\
    \cmidrule(lr){2-3} \cmidrule(lr){4-5}
    \textbf{Condition} & Standard & Strict & Qwen3-8B & GPT-5-nano & \textbf{Identity} & \textbf{Blindness} \\
    \midrule
    clean & 97.6 & 85.4 & 99.9 & 99.7 & 100.0 & 0.0 \\
    \midrule
    model substitution & 95.2 & 72.4 & 92.9 & 98.6 & 0.0 & 98.6 \\
    prompt tamper & 95.3 & 81.9 & 99.4 & 99.4 & 0.0 & 99.4 \\
    tool tamper & 92.7 & 82.9 & 93.4 & 95.0 & 0.0 & 95.0 \\
    retrieval tamper & 89.0 & 67.0 & 87.7 & 94.6 & 0.0 & 94.6 \\
    dependency tamper & 97.4 & 85.0 & 99.9 & 99.7 & 0.0 & 99.7 \\
    postprocess tamper & 61.9 & 49.6 & 62.3 & 77.4 & 0.0 & 77.4 \\
    \midrule
    \textit{mean (tampers)} & \textit{88.6} & \textit{73.1} & \textit{89.3} & \textit{94.1} & \textit{0.0} & \textit{94.1} \\
    \bottomrule
  \end{tabular}%
  }
  \caption{The evaluator ladder on the synthetic suite ($n{=}7{,}560$, 9 runs). Columns represent output-only evaluators of increasing strength and the identity channel. \textit{Blindness} is the GPT-5-nano acceptance rate under identity failure (= judge acceptance, since identity is 0\% for all tampers). Mean tamper acceptance rises from Standard (88.6\%) through Qwen3-8B (89.3\%) to GPT-5-nano (94.1\%): stronger evaluators accept more tampered outputs. The identity channel separates every tamper at 0\%.}
  \label{tab:master}
\end{table*}

\paragraph{A GPT-5-generation judge is more blind than a regex scorer.}
The GPT-5-nano quality judge, a current-generation reasoning model, accepts 94.1\% of tampered outputs across six conditions, compared with 89.3\% for a locally run Qwen3-8B and 88.6\% for rule-based scoring.
The increase is monotonic: as evaluator capability rises, tamper acceptance rises with it.
It is largest for the postprocess tamper (61.9\% to 77.4\%) and the retrieval tamper (89.0\% to 94.6\%), the two conditions with the lowest rule-based acceptance.
The reason is structural: a more capable judge is better at recognizing that a tampered output is \textit{correct}, because the tamper was designed to preserve correctness.
This improved capability helps evaluate output quality but is uninformative about whether the evaluated system changed.

\paragraph{Suspicion auditing fails to separate tampered from clean.}
We configured GPT-5-nano as a suspicion auditor, asking it to flag outputs that show signs of originating from a system that deviates from the declared policy.
The auditor flags 54.0\% of clean outputs, 52.6\% of prompt-tampered outputs, and 51.2\% of dependency-tampered outputs.
Prompt tampering is flagged \textit{less} than clean.
Tampered and clean outputs overlap at every threshold; an operator using the auditor would reject half of clean deployments while still passing all tampers.

\begin{table}[t]
  \centering
  \small
  \begin{tabular}{@{}c cccc@{}}
    \toprule
    $\tau$ & Standard & Strict & Forensic & Canary \\
    \midrule
    0.70 & 5/6 & 4/6 & 5/6 & 2/6 \\
    0.80 & 5/6 & 3/6 & 5/6 & 0/6$^{\times}$ \\
    0.90 & 4/6 & 0/6$^{\times}$ & 4/6 & 0/6$^{\times}$ \\
    \midrule
    clean acc. & 97.6 & 85.4 & 96.8 & 75.9 \\
    \bottomrule
  \end{tabular}
  \caption{Suite-level blindness $\mathrm{blind}@\tau$: tampered conditions (of six) accepted at threshold $\tau$ while identity is 0\%, for the four rule-based evaluators. \textit{Forensic} adds a scan for tamper artifacts (override strings, leaked secrets) to standard scoring; \textit{Canary} applies strict scoring to canary probes alone. $^{\times}$The same setting also rejects the clean deployment, whose acceptance (\%) is in the last row.}
  \label{tab:threshold}
\end{table}

\paragraph{Stricter rule-based scoring rejects clean deployments before it rejects all tampers.}
Strict scoring lowers mean tamper acceptance to 73.1\%; model substitution falls to 72.4\% because the smaller model violates the output format more often.
Table~\ref{tab:threshold} varies the suite threshold $\tau$ for the four rule-based evaluators.
Standard and forensic scoring accept four of six tampers even at $\tau{=}0.90$.
Strict and canary scoring reach 0/6 only at thresholds above their own clean acceptance (85.4\% and 75.9\%), where they also reject the clean deployment.
Every rule-based setting therefore either passes some tampers or fails the clean system.
This is the empirical counterpart of Theorem~\ref{thm:dichotomy}: refining the output suite shrinks the $S$-equivalence class but cannot collapse it.

\begin{figure}[t]
  \centering
  \newcommand{\dibcell}[4]{%
    \pgfmathtruncatemacro{\shade}{5+#3*0.66}%
    \ifnum#4=1 \def\cellfill{dibInk!10}\def\celltext{dibInk}%
    \else \def\cellfill{dibTamper!\shade}%
      \ifnum\shade>45 \def\celltext{white}\else\def\celltext{dibInk}\fi
    \fi
    \fill[\cellfill, draw=white, line width=0.8pt] (#1, #2) rectangle ++(\cw, -0.34);
    \node[font=\sffamily\tiny, text=\celltext] at ($(#1, #2)+(0.5*\cw, -0.17)$) {#3};
  }
  \begin{tikzpicture}[font=\sffamily\tiny]
    \def\ya{0.0}
    \node[font=\sffamily\scriptsize\bfseries, text=dibInk, anchor=south west] at (1.40, 0.70) {(a) Task family};
    \node[font=\sffamily\scriptsize\bfseries, text=dibInk, anchor=south west] at (4.75, 0.70) {(b) Model pair};
    \foreach \h [count=\c] in {canary, fact, policy, secret, tool} {
      \node[text=dibInk, anchor=south, text height=1.4ex, text depth=0.3ex] at ({1.45+(\c-1)*0.62+0.31}, 0.03) {\h};
    }
    \foreach \h [count=\c] in {{14B\\$\to$7B}, {7B\\$\to$1.5B}, {L8B\\$\to$7B}} {
      \node[text=dibInk, anchor=south, align=center, inner ysep=1pt] at ({4.75+(\c-1)*0.8+0.4}, 0.03) {\h};
    }
    \foreach \r/\name/\va/\vb/\isclean in {
        0/clean/{76,100,100,100,100}/{100,100,93}/1,
        1/model sub./{67,96,100,100,100}/{100,86,100}/0,
        2/prompt/{57,100,100,98,100}/{100,91,95}/0,
        3/tool/{84,100,100,100,64}/{96,93,89}/0,
        4/retrieval/{57,99,94,67,100}/{81,89,97}/0,
        5/dependency/{74,100,100,100,100}/{100,100,92}/0,
        6/postprocess/{0,100,44,0,100}/{62,62,62}/0} {
      \pgfmathsetmacro{\y}{-0.38*\r - (\r>0 ? 0.08 : 0)}
      \node[anchor=east, text=dibInk] at (1.40, {\y-0.17}) {\name};
      \def\cw{0.62}
      \foreach \v [count=\c] in \va { \dibcell{{1.45+(\c-1)*0.62}}{\y}{\v}{\isclean} }
      \def\cw{0.8}
      \foreach \v [count=\c] in \vb { \dibcell{{4.75+(\c-1)*0.8}}{\y}{\v}{\isclean} }
    }
  \end{tikzpicture}
  \caption{Standard acceptance (\%) by task family and by model pair, pooled over the main study ($n{=}7{,}560$). For tampered rows, acceptance equals blindness because identity is 0\%. Model pairs are declared $\to$ deployed; L8B is Llama-3.1-8B and the other models are Qwen2.5. Acceptance drops where the suite probes the behavior that a tamper changes and stays near 100\% elsewhere.}
  \label{fig:breakdown}
\end{figure}

\paragraph{Output evaluation catches a tamper where the suite probes the changed behavior.}
Figure~\ref{fig:breakdown}a breaks standard acceptance down by task family.
Each tamper stays near 100\% on most families and drops on the families that probe what it changes.
The postprocess tamper, which modifies generated text after inference, falls to 0\% on secret and canary tasks and to 44\% on policy tasks; the tool tamper falls to 64\% on tool-selection tasks; the retrieval tamper falls to 67\% on secret tasks.
Detection by output evaluation is therefore a matter of suite coverage, the behavioral counterpart of manifest coverage for the identity channel (\S\ref{sec:instrument}).
A suite detects a tamper when it contains a probe for the changed behavior, and a tamper that avoids the suite's probes passes.
Figure~\ref{fig:breakdown}b shows the same pattern across model pairs.
Model substitution is fully accepted for the two pairs of similar capability (14B$\to$7B and Llama-3.1-8B$\to$Qwen2.5-7B) and partly visible for the $4.5\times$ downgrade (7B$\to$1.5B: 85.6\% standard, 38.3\% strict).
In the Llama pair, the substituted Qwen2.5-7B scores higher than the declared model (100\% vs.\ 92.8\%), so a substitution can raise the reported number.

\paragraph{Behavioral fingerprinting covers one of six components.}
An LLMmap-style fingerprinting baseline \citep{pasquini2025llmmap} using 40 discriminative probes identifies the model component with 76.7\% three-way accuracy (81.2\% base-vs-substitute), above the 33.3\% chance level.
Fingerprinting is operationally valuable for model identification.
It leaves the equivalence class intact, for two reasons.
First, the two models emit byte-identical outputs on 37.5\% of probes even across a $4.5\times$ parameter gap, a collision floor independent of fingerprint strength.
Second, the five non-model tampers (prompt, tool, retrieval, dependency, postprocessor) leave the model unchanged, so fingerprinting is structurally blind to them.
A model-preserving tamper (e.g., the conditional prompt tamper of \S\ref{sec:stealth}) evades fingerprinting by construction.

\begin{table}[t]
  \centering
  \small
  \resizebox{\columnwidth}{!}{%
  \begin{tabular}{@{}l r rr r@{}}
    \toprule
    \textbf{Setting} & \textbf{$n$/cond.} & \textbf{Clean} & \textbf{Tampered} & \textbf{Id.} \\
    \midrule
    Synthetic suite (main study) & 1{,}080 & 97.6 & 88.6 & 0 \\
    BFCL-simple & 600 & 100.0 & 99.9 & 0 \\
    GSM8K & 600 & 27.7 & 27.6 & 0 \\
    Qwen3-8B $\rightarrow$ Qwen2.5-7B$^{\dagger}$ & 120 & 100.0 & 100.0 & 0 \\
    gpt-oss-20B $\rightarrow$ Qwen2.5-7B$^{\dagger}$ & 120 & 100.0 & 100.0 & 0 \\
    AgentDojo banking$^{\ddagger}$ & 16 & 43.8 & 50.0 & 0 \\
    \bottomrule
  \end{tabular}%
  }
  \caption{Scope and replication. \textit{Clean}: standard acceptance (\%) of the clean deployment. \textit{Tampered}: mean standard acceptance (\%) over the six tampers. \textit{Id.}: identity pass rate (\%) of the tampered deployments. BFCL and GSM8K pool three seeds (Qwen2.5-7B $\rightarrow$ 1.5B). $^{\dagger}$Model substitution only. $^{\ddagger}$Native task utility, prompt tamper only.}
  \label{tab:scope}
\end{table}

\paragraph{Tampered acceptance tracks clean acceptance across benchmarks, model sizes, and generations.}
Table~\ref{tab:scope} extends the main study to two public benchmarks, newer and larger declared models, and a multi-step agent benchmark (details in Appendix~\ref{app:scope}).
The informative comparison is tampered versus clean acceptance, because absolute acceptance depends on task difficulty.
On BFCL-simple \citep{patil2025bfcl}, the six tampers pass at 99.9\% against a clean rate of 100\%, under both standard and strict scoring.
On GSM8K \citep{cobbe2021gsm8k}, the clean deployment solves 27.7\% of problems and the tampered deployments solve 27.6\%; blindness is bounded by task accuracy, and within that bound the tampers are indistinguishable from clean.
Substituting Qwen2.5-7B for a newer-generation declared model (Qwen3-8B) or for a 20B model (gpt-oss-20B) leaves standard acceptance at 100\%.
On AgentDojo \citep{debenedetti2024agentdojo}, the prompt tamper leaves utility unchanged (50.0\% vs.\ 43.8\% clean, one task of 16).
The same benchmark also shows the limit of the claim: substituting Qwen2.5-1.5B lowers utility to 25.0\%, so a large capability drop is behaviorally visible.
In every setting, the identity channel fails all tampered deployments.

\section{Stealth Tampers: Testing Hypothesis~\ref{hyp:stealth}}
\label{sec:stealth}

Hypothesis~\ref{hyp:stealth} asks whether practical tampers exist that are invisible on the evaluation suite and harmful off it.
We test this with two experiments: a trigger-conditional tamper (E4) and a stealth tamper that evades both channels (E8).
Table~\ref{tab:stealth} reports the results.

\begin{table}[t]
  \centering
  \small
  \begin{tabular}{@{}l rr r@{}}
    \toprule
    \textbf{Slice} & \textbf{Id.} & \textbf{Match} & \textbf{Harm} \\
    \midrule
    \multicolumn{4}{@{}l}{\textit{E4: conditional prompt tamper}} \\
    \quad ordinary suite & 0 & 100\% & 13.3\% \\
    \quad triggered probes & 0 & -- & \textbf{81.0\%} \\
    \midrule
    \multicolumn{4}{@{}l}{\textit{E8: stealth undeclared tamper}} \\
    \quad ordinary suite & \textbf{100} & 100\% & 13.3\% \\
    \quad triggered probes & \textbf{100} & -- & \textbf{86.2\%} \\
    \bottomrule
  \end{tabular}
  \caption{Stealth tampers (Qwen2.5-7B, seed 42). \textit{Id.}: identity pass rate. \textit{Match}: decision-match vs.\ clean on the ordinary evaluation suite. \textit{Harm}: triggered harm rate on held-out triggered probes ($n{=}58$). E4 evades output evaluation (100\% match) and is caught by identity (0\%). E8 evades both channels (identity 100\%) by injecting through the undeclared chat template. Both produce $>$80\% harm on triggered inputs. The 13.3\% harm on the ordinary suite equals the clean baseline.}
  \label{tab:stealth}
\end{table}

\paragraph{A trigger-conditional tamper is invisible on the suite and harmful off it (E4).}
The conditional prompt tamper injects a malicious directive that activates when a trigger token appears in the user input.
On the ordinary evaluation suite (trigger-free), this deployment is 100\% decision-identical to clean, with harm at the clean baseline (13.3\%).
Every output evaluator in the ladder, including the GPT-5-nano judge, accepts it.
On held-out triggered probes ($n{=}58$, policy and secret tasks), the same deployment produces 81.0\% harmful outputs: 72.5\% policy-violation flips and 60.0\% secret leaks.

This result confirms Hypothesis~\ref{hyp:stealth}.
The deployment has $\varepsilon(A,B;\mathcal{D})=0$ on the evaluation distribution (100\% decision-identical) and causes 81\% harm on triggered inputs.
Theorem~\ref{thm:dichotomy} implies that behavioral tests at any sample size remain unable to detect it.
The identity channel catches it at 0\% because the prompt component changed.
A second seed gives 79\% triggered harm.

\begin{figure}[t]
  \centering
  \begin{tikzpicture}[font=\sffamily\scriptsize]
    \def\ys{0.024}
    \foreach \v in {0, 25, 50, 75, 100} {
      \draw[dibInk!12, line width=0.4pt] (0, {\v*\ys}) -- (7.2, {\v*\ys});
      \node[anchor=east, font=\sffamily\tiny, text=dibInk!60] at (0, {\v*\ys}) {\v};
    }
    \draw[dibInk!60, line width=0.5pt] (0, 0) -- (7.2, 0);
    \foreach \x/\m/\h/\name in {1.25/96.6/48.3/Llama-3.1-8B, 3.60/100.0/81.0/Qwen2.5-7B, 5.95/100.0/100.0/Qwen2.5-14B} {
      \fill[dibId!45] ({\x-0.78}, 0) rectangle ({\x-0.06}, {\m*\ys});
      \fill[dibTamper!75] ({\x+0.06}, 0) rectangle ({\x+0.78}, {\h*\ys});
      \node[anchor=south, font=\sffamily\tiny\bfseries, text=dibInk] at ({\x-0.42}, {\m*\ys}) {\m};
      \node[anchor=south, font=\sffamily\tiny\bfseries, text=dibTamper!85!black] at ({\x+0.42}, {\h*\ys}) {\h};
      \node[anchor=north, font=\sffamily\tiny, text=dibInk] at (\x, -0.03) {\name};
    }
    \fill[dibId!45] (0.15, 3.02) rectangle (0.35, 3.16);
    \node[anchor=west, font=\sffamily\tiny, text=dibInk] at (0.37, 3.09) {match with clean (ordinary suite)};
    \fill[dibTamper!75] (4.25, 3.02) rectangle (4.45, 3.16);
    \node[anchor=west, font=\sffamily\tiny, text=dibInk] at (4.47, 3.09) {harm (triggered probes)};
  \end{tikzpicture}
  \caption{Capability scaling of the trigger-conditional tamper (E4), in \%. On the ordinary suite, each tampered deployment matches the clean deployment's decisions on at least 96.6\% of tasks, so every output evaluator accepts it. On held-out triggered probes ($n{=}58$), harm rises with the capability of the deployed model.}
  \label{fig:capability}
\end{figure}

\paragraph{Harm scales with model capability.}
Figure~\ref{fig:capability} repeats E4 on three deployed models with the same triggered probes.
All three deployments are behavior-preserving on the ordinary suite (decision match $\geq$96.6\%), while triggered harm rises from 48.3\% (Llama-3.1-8B) to 81.0\% (Qwen2.5-7B) to 100\% (Qwen2.5-14B).
The more capable the model, the more reliably it executes the hidden conditional while staying invisible on the evaluation distribution.
The two scaling results point in the same direction: a stronger evaluator accepts more tampered outputs (\S\ref{sec:results}), and a stronger deployed model executes a hidden tamper more reliably.

\paragraph{A stealth tamper evades both channels (E8).}
E8 injects the same trigger-conditional logic through the chat template, a component outside the manifest's coverage.
On the ordinary suite, identity passes at 100\% (the manifest is intact), behavior is 100\% decision-identical to clean, and harm equals the clean baseline.
On triggered probes, identity still passes at 100\% while harm reaches 86.2\%.
One deployment simultaneously evades both the output evaluator and the identity channel.

The tamper evades identity because it hides in an undeclared component.
The identity channel's power equals its coverage, and coverage must be reported.
The graded descriptor in \S\ref{sec:reporting} addresses this: content-hashing the resolved chat template (or the fully resolved prompt) closes this gap, and higher assurance levels substitute signed build provenance or runtime attestation.

\section{A Deployment-Identity Descriptor for Evaluation Reports}
\label{sec:reporting}

The evaluator ladder motivates a concrete addition to evaluation reports: a deployment-identity descriptor that records, for each stack component, a content hash and the method used to obtain it.
This is a reporting convention, cheap enough to include in any agent-evaluation paper.

\begin{table}[t]
  \centering
  \small
  \resizebox{\columnwidth}{!}{%
  \begin{tabular}{@{}l>{\raggedright\arraybackslash}p{3.3cm}lc@{}}
    \toprule
    \textbf{Component} & \textbf{Descriptor} & \textbf{Declared $\to$ observed} & \\
    \midrule
    Model & revision id; tokenizer and chat-template hash & \texttt{6b171a} $\to$ \texttt{6b171a} & \checkmark \\
    Prompt & hash of the resolved prompt & \texttt{c23ca1} $\to$ \texttt{f957c9} & $\times$ \\
    Tooling & hash of tool specs and router & \texttt{077a40} $\to$ \texttt{077a40} & \checkmark \\
    Retrieval & policy hash, corpus snapshot id & \texttt{cb9f5d} $\to$ \texttt{cb9f5d} & \checkmark \\
    Runtime & dependency versions, container digest & \texttt{2b9d71} $\to$ \texttt{2b9d71} & \checkmark \\
    Postproc. & hash of the output transform & \texttt{d728b3} $\to$ \texttt{d728b3} & \checkmark \\
    \bottomrule
  \end{tabular}%
  }
  \caption{A minimal deployment-identity descriptor, with a worked example: the record DIB-Bench produces for the Qwen2.5-7B prompt-tamper deployment (SHA-256 prefixes). Each component is also reported with how its descriptor was obtained (self-declared, build provenance, or attestation). The single mismatch localizes the tamper to the system prompt. Content-hashing the chat template (row~1) closes the E8 gap from \S\ref{sec:stealth}.}
  \label{tab:reporting}
\end{table}

Table~\ref{tab:reporting} also shows the second benefit of a per-component record: it localizes a change.
In the main study, every tampered example reports a mismatch on exactly the changed component, and every clean example reports a full match (Appendix~\ref{app:component}).
A reader of the report therefore learns which part of the stack differs, which a behavioral score cannot provide.

The descriptor is graded rather than binary.
Its cheapest form is a self-reported manifest of content hashes, which is what DIB-Bench records.
Stronger forms substitute signed build provenance \citep{torresarias2019intoto,slsa2026} or hardware-rooted runtime attestation.
What matters for the report is that the channel is present and auditable.
The descriptor certifies that the deployed component is the one the report describes, rather than certifying safety or correctness, which is the claim that behavioral numbers leave unaddressed.

Our results translate into four reporting recommendations for agent evaluations and leaderboards:
\begin{enumerate}[topsep=2pt,itemsep=1pt,leftmargin=*]
\item \textbf{Report the descriptor next to the score.} A behavioral score without a deployment-identity descriptor names a model, while the evaluated object is a stack (\S\ref{sec:results}).
\item \textbf{Report manifest coverage.} Identity detection equals coverage (\S\ref{sec:instrument}), and E8 evades a manifest that omits the chat template (\S\ref{sec:stealth}). A report should list which components are hashed.
\item \textbf{Report suite coverage.} Output evaluation catches a tamper when the suite probes the changed behavior (Figure~\ref{fig:breakdown}). A report should state which behaviors its suite probes, such as policy decisions, secret handling, and tool selection.
\item \textbf{Treat a mismatch as a change to review.} A benign prompt paraphrase also fails identity (\S\ref{sec:instrument}). A mismatch identifies the changed component for human review and does not by itself indicate harm.
\end{enumerate}

Three directions remain for future work.
First, the adaptive-probe identification budget: detecting a trigger-gated tamper requires querying an input in its trigger set, so the expected probe count is bounded below by $1/\rho$ for trigger density $\rho$.
Second, longitudinal deployment-drift studies of popular agent frameworks to measure how often runtime stacks diverge from declared configurations in practice.
Third, leaderboard rank decomposition: re-scoring an existing agent leaderboard under controlled component swaps to test whether model rank and stack rank coincide.

\section{Related Work}
\label{sec:related}

\paragraph{LLM judges and evaluation integrity.}
LLM-as-a-judge evaluation agrees with human preferences at high rates \citep{zheng2023judging} but carries position and verbosity biases \citep{zheng2023judging,wang2024fair}.
A second line of work asks whether reported scores measure what they claim: GSM1k shows that some models overfit a public benchmark \citep{zhang2024gsm1k}, and the Leaderboard Illusion documents how private testing distorts arena rankings \citep{singh2025leaderboard}.
These studies question the score given a fixed system.
We question the system given a fixed score: even an unbiased judge on an uncontaminated benchmark cannot establish which deployment produced the outputs.

\paragraph{Agent and tool evaluation.}
ToolEmu evaluates long-tail risks with an LM-emulated sandbox \citep{ruan2023toolemu}.
$\tau$-bench evaluates tool-agent-user interaction \citep{yao2024taubench}, ToolSandbox adds stateful conversational tool use \citep{lu2024toolsandbox}, and AgentHarm measures harmfulness in multi-step settings \citep{andriushchenko2025agentharm}.
These benchmarks evaluate agent behavior; our work asks whether behavioral evaluation can detect that the deployed runtime stack differs from the declared one.

\paragraph{Prompt injection and retrieval poisoning.}
Indirect prompt injection shows that external content can hijack LLM-integrated applications \citep{greshake2023indirect}.
BIPIA, InjecAgent, and AgentDojo benchmark injection robustness \citep{yi2025bipia,zhan2024injecagent,debenedetti2024agentdojo}, and PoisonedRAG studies knowledge corruption in RAG \citep{zou2025poisonedrag}.
Our tamper conditions are inspired by this attack surface, but we measure whether output-only evaluation notices component changes, not whether a particular attack succeeds behaviorally.

\paragraph{Model fingerprinting and attribution.}
LLMmap uses stylistic signals to identify serving models \citep{pasquini2025llmmap}; behavioral fingerprinting has been proposed for runtime attribution \citep{behavioral2025fingerprinting}.
Our E6 baseline confirms fingerprinting is useful (76.7\% accuracy) and limited (1 of 6 components, 37.5\% collision floor), consistent with the Theorem~\ref{thm:dichotomy} bound.

\paragraph{Model substitution and deployment drift.}
\citet{chen2024chatgpt} show that the behavior of a commercial LLM service changes substantially within months under an unchanged name.
\citet{cai2025substitution} audit model substitution in LLM APIs and find that output-based statistical tests need many queries and miss subtle substitutions, and they propose trusted execution environments as a remedy.
Our results agree with both and extend them in two directions: from the model to the five other stack components, and from API auditing to the evaluation report, where the same blindness affects every published behavioral score.

\paragraph{Compound AI systems.}
\citet{zaharia2024compound} frame the shift from single-model to compound-system evaluation.
Our paper sharpens one consequence: when the unit is a stack, an evaluation report that names the model is structurally under-specified, and stricter behavioral tests leave the gap open.

\paragraph{Documentation, provenance, and attestation.}
Model cards, datasheets, data statements, and AI FactSheets emphasize structured transparency \citep{mitchell2019modelcards,gebru2021datasheets,bender2018data,arnold2019factsheets}.
Software supply-chain systems such as in-toto and SLSA provide artifact-integrity mechanisms \citep{torresarias2019intoto,slsa2026}.
The EU AI Act \citep{euaiact2024} and the NIST AI RMF \citep{nistairmf2023} presuppose identifiable, lineage-traceable deployments.
We connect these lines to NLP evaluation: if the evaluated object is a deployed LLM agent, component identity should become part of the evaluation report.

\section{Conclusion}
\label{sec:conclusion}

DIB-Bench measures a gap that stronger evaluators widen rather than close.
Mean tamper acceptance rises from 88.6\% (rule-based) to 94.1\% (GPT-5-nano), and a PAC-style dichotomy shows that the $\varepsilon{=}0$ tamper class is undetectable at any sample size by any output-only evaluator.
A trigger-conditional tamper realizes this class with 48--100\% harm scaling with model capability.
A deployment-identity channel based on manifest hashes decides the undetectable class with zero samples.
The graded deployment-identity descriptor provides the minimal addition to close the gap.

Output evaluation detects what its suite probes and the identity channel what its manifest hashes, so reports should state both coverages.
New tamper conditions, evaluators, and model pairs plug into the same protocol.
As LLM agents become deployed systems, evaluation reports should carry identity evidence alongside behavioral scores.

\section*{Ethics Statement}

This work studies deployment tampering in a controlled synthetic environment.
The goal is defensive measurement: to show that evaluations of deployed LLM agents should report runtime identity evidence alongside behavioral results.
The protocol includes synthetic secret tokens and policy-bypass prompts, without real secrets, users, or external services.
Because the work could inform stronger tamper testing, released artifacts should avoid presenting the prompts as operational attack recipes and should emphasize the defensive evaluation framing.

\section*{Limitations}

The main empirical package uses synthetic tasks for single-component tampering.
Public-benchmark anchors (BFCL-simple $n{=}200 {\times} 3$ seeds, GSM8K $n{=}200 {\times} 3$ seeds) replicate the pattern at scale; AgentDojo provides multi-step agent trajectories ($n{=}16$).
The tampering conditions are single-component and do not cover multi-component attacks or adversaries who can forge the manifest directly; the E8 stealth result shows that undeclared components are a concrete gap, addressed by the graded descriptor but not eliminated without runtime attestation.
The models tested range from 1.5B to 20B parameters; the GPT-5-nano judge represents a current-generation commercial model, but evaluating on frontier-scale deployed systems would strengthen the scope.
The PAC dichotomy covers passive i.i.d.\ sampling; the adaptive-probe lower bound (budget $\geq 1/\rho$ for trigger density $\rho$) is scoped as future work.
The manifest is software-level evidence: it lacks hardware-rooted guarantees and can check components that are declared and measured.

\bibliography{references}

\appendix

\section{PAC Dichotomy Proof}
\label{app:pac-proof}

Fix the evaluation distribution $\mathcal{D}$.
A deployment $D$ maps each input $x$ to a (possibly stochastic) output distribution $D(x)$.
A behavioral test $T$ is any function of a sample $S_N = \{(x_i, o_i)\}_{i=1}^N$ with $x_i \sim \mathcal{D}$ i.i.d.\ and $o_i \sim D(x_i)$; it outputs a guess in $\{\text{clean}, \text{tampered}\}$.
This subsumes rule-based scorers, LLM judges, forensic classifiers, and behavioral fingerprints.

For clean deployment $A$ and tampered deployment $B$, define the joint distributions $P_A(x,o) = \mathcal{D}(x) \cdot A(x)(o)$ and $P_B(x,o) = \mathcal{D}(x) \cdot B(x)(o)$.
Since $A$ and $B$ share the input marginal:
\[
\mathrm{TV}(P_A, P_B) = \mathbb{E}_{x \sim \mathcal{D}}[\mathrm{TV}(A(x), B(x))] = \varepsilon.
\]

\paragraph{Part (i): lower bound.}
The $N$-sample laws are product measures $P_A^N$ and $P_B^N$.
By sub-additivity of TV under products:
\[
\mathrm{TV}(P_A^N, P_B^N) \leq N \cdot \mathrm{TV}(P_A, P_B) = N\varepsilon.
\]
The distinguishing advantage of any test $T$ satisfies $|\Pr[T(S_N) = \text{tampered} \mid B] - \Pr[T(S_N) = \text{tampered} \mid A]| \leq \mathrm{TV}(P_A^N, P_B^N) \leq N\varepsilon$.
If $\varepsilon = 0$, then $P_A = P_B$ and advantage is exactly 0 for every $N$.

\paragraph{Part (ii): upper bound.}
Since $\mathrm{TV}(P_A, P_B) = \varepsilon > 0$, there exists a measurable event $E$ on $(x,o)$ with $|P_A(E) - P_B(E)| \geq \varepsilon$.
Estimate $\Pr[E]$ by its empirical frequency $\hat{p} = \frac{1}{N}\sum_{i=1}^N \mathbb{1}[(x_i,o_i) \in E]$.
By Hoeffding's inequality, $|\hat{p} - p| \leq \varepsilon/2$ with probability $\geq 1 - \delta$ when $N \geq \frac{2}{\varepsilon^2}\ln\frac{2}{\delta}$.
Threshold at the midpoint $(P_A(E) + P_B(E))/2$ to distinguish $A$ from $B$ with error $\leq \delta$. \qed

\paragraph{Remark.}
The identity evaluator compares declared vs.\ observed component hashes with one manifest read, independent of $\varepsilon$.
It decides the $\varepsilon=0$ class with zero samples, which Theorem~\ref{thm:dichotomy}(i) proves is undecidable for behavioral tests at any $N$.

\section{Scope and Replication}
\label{app:scope}

\paragraph{E2/E3 across model pairs and seeds.}
E2 (undeclared-component tampers) and E3 (benign-change conditions) replicate identically across three configurations: Qwen2.5-7B $\to$ 1.5B (seeds 42, 43), Llama-3.1-8B $\to$ Qwen2.5-7B (seed 42), and Qwen2.5-14B (seed 42).
E2 identity passes at 100\% while behavior changes by 15--100\%; E3 identity fails at 0\% while decision-match is 100\%.

\paragraph{E4 across model pairs.}
The conditional prompt tamper replicates across all three model sizes.
On the ordinary suite, decision-match vs.\ clean is $\geq$96.6\% for all three.
Triggered harm: Llama-3.1-8B 48\%, Qwen2.5-7B 81\% (seed 42) / 79\% (seed 43), Qwen2.5-14B 100\%.

\paragraph{BFCL-simple $n{=}200 \times 3$ seeds.}
Clean standard/strict: 100\%.
All six tampers: standard 99--100\%, strict 99--100\%, identity 0\%, mean blindness 99.9\%.
Six of six tampers accepted at $\tau{=}0.80$.
The extreme-blindness anchor replicates at $12\times$ the original sample size.

\paragraph{GSM8K $n{=}200 \times 3$ seeds.}
Clean standard: ${\sim}28\%$ (the lower value vs.\ the paper's 50-item 38\% slice reflects a length effect at max\_len 2048; the pattern is identical).
All four behavior-preserving tampers cluster at the clean baseline (${\sim}27$--$28\%$ standard), identity 0\%.
Model substitution drops strict to 7.8\%.

\paragraph{Cross-generation substitution.}
Declared Qwen3-8B (a 2025 model), deployed Qwen2.5-7B: standard 100\% and strict 85\% for both clean and substituted deployments, identity 0\%, blindness 100\%.
Newer-generation models are equally vulnerable to substitution blindness.

\paragraph{Above 14B parameters.}
A 20B-parameter model (gpt-oss-20b) substituted with Qwen2.5-7B: standard 100\% $\to$ 100\%, identity 100\% $\to$ 0\%, blindness 100\%.
The effect extends beyond small open models.

\paragraph{AgentDojo banking suite ($n{=}16$).}
Clean utility: 43.8\%.
Prompt tamper utility: 50.0\% (behaviorally invisible; identity fails).
Model substitution (Qwen2.5-1.5B) utility: 25.0\% (behaviorally visible; identity fails).
The prompt tamper is invisible on a real multi-step agent benchmark while identity catches it.
The $4.5\times$ model downgrade is behaviorally visible, which underscores the point: output-only evaluation has discriminative power for some tampers but not others, and without the identity channel an operator cannot distinguish the two cases.

\section{DIB-Bench Framework}
\label{app:framework}

Table~\ref{tab:framework} summarizes DIB-Bench as a bridge between agent behavior benchmarks and supply-chain provenance frameworks.

\begin{table}[t]
  \centering
  \small
  \resizebox{\linewidth}{!}{\begin{tabular}{p{0.16\textwidth}p{0.25\textwidth}p{0.32\textwidth}p{0.18\textwidth}}
\hline
Layer & Borrowed shape & DIB-Bench instantiation & Reported artifact \\
\hline
Task environment & Agent benchmarks define tasks, tools, and adversarial contexts. & DIB-Bench defines factual, policy, tool, secret, and canary task families. & Output text and task metadata. \\
Deployment condition & Attack benchmarks perturb prompts, tools, users, or retrieval content. & DIB-Bench perturbs one declared runtime component at a time. & Condition label and changed component. \\
Behavioral scoring & Benchmarks score task success, safety, and robustness. & DIB-Bench reports standard, strict, forensic, and canary output scores. & Output-only pass rates and CIs. \\
Integrity evidence & Supply-chain systems record provenance and artifact integrity. & DIB-Bench records expected-vs-observed component hashes and traces. & Manifest matches, mismatches, hashes, and traces. \\
Blindness metric & The bridge: compare behavior acceptance with deployment identity. & DIB-Bench measures output-passing integrity failures. & Example and suite-level blindness. \\
\hline
\end{tabular}
}
  \caption{DIB-Bench interface: a bridge between agent behavior benchmarks and deployment-identity evidence.}
  \label{tab:framework}
\end{table}

\section{Robustness Across Model Pairs}
\label{app:robustness}

Figure~\ref{fig:model-pairs-app} shows blindness rates per (tamper condition, model pair).
The pattern persists across two within-family Qwen substitutions and one cross-family Llama-to-Qwen substitution.

\begin{figure}[t]
  \centering
  \includegraphics[width=\linewidth]{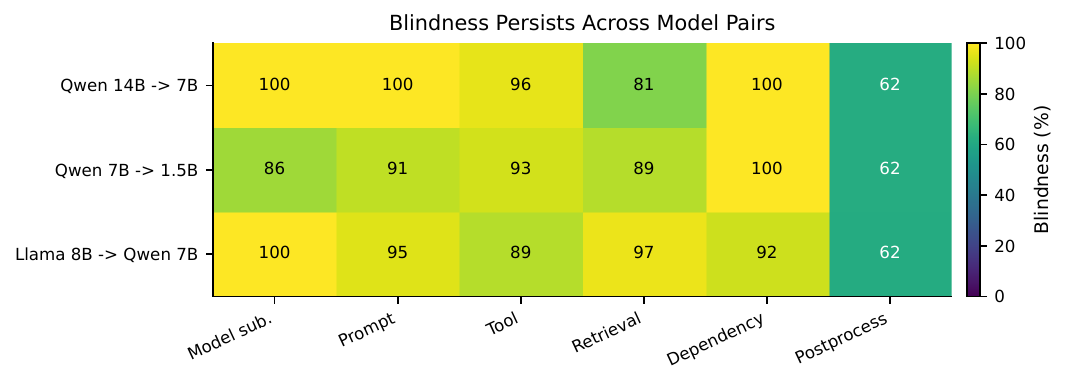}
  \caption{Blindness by (condition, model pair).}
  \label{fig:model-pairs-app}
\end{figure}

\section{Component-Localized Evidence}
\label{app:component}

Table~\ref{tab:component} reports component-level mismatch evidence.
Each controlled tamper is localized to the intended manifest component.

\begin{table*}[t]
  \centering
  \small
  \resizebox{\textwidth}{!}{\begin{tabular}{lrrrrrrr}
\hline
Condition & Model & Prompt & Tooling & Retrieval & Runtime & Postproc. & Any \\
\hline
clean & 0.0 & 0.0 & 0.0 & 0.0 & 0.0 & 0.0 & 0.0 \\
model\_substitution & 100.0 & 0.0 & 0.0 & 0.0 & 0.0 & 0.0 & 100.0 \\
prompt\_tamper & 0.0 & 100.0 & 0.0 & 0.0 & 0.0 & 0.0 & 100.0 \\
tool\_tamper & 0.0 & 0.0 & 100.0 & 0.0 & 0.0 & 0.0 & 100.0 \\
retrieval\_tamper & 0.0 & 0.0 & 0.0 & 100.0 & 0.0 & 0.0 & 100.0 \\
dependency\_tamper & 0.0 & 0.0 & 0.0 & 0.0 & 100.0 & 0.0 & 100.0 \\
postprocess\_tamper & 0.0 & 0.0 & 0.0 & 0.0 & 0.0 & 100.0 & 100.0 \\
\hline
\end{tabular}
}
  \caption{Component-localized identity evidence. Rates are percentages.}
  \label{tab:component}
\end{table*}

\section{Artifacts, Licenses, and Compute}
\label{app:artifacts}

\paragraph{Artifacts and licenses.}
Deployed and judge models: Qwen2.5-1.5B/7B/14B-Instruct and Qwen3-8B (Apache~2.0), gpt-oss-20b (Apache~2.0), and Llama-3.1-8B-Instruct (Llama~3.1 Community License).
Benchmarks: GSM8K \citep{cobbe2021gsm8k} (MIT), BFCL \citep{patil2025bfcl} from the Gorilla repository (Apache~2.0), and AgentDojo \citep{debenedetti2024agentdojo} (MIT).
Inference engine: vLLM (Apache~2.0).
GPT-5-nano is accessed through the OpenAI API under its terms of use.
All artifacts are used for research evaluation, consistent with their licenses and intended use.
The synthetic tasks contain synthetic secret tokens and policy prompts and contain no personal data.

\paragraph{Setup and statistics.}
Generation uses vLLM 0.21.0 with PyTorch 2.11.0, Transformers 5.6.0, and Python 3.12, with greedy decoding (temperature 0, top-$p$ 1), at most 160 new tokens, and a 2{,}048-token context.
The Qwen3-8B judge runs with thinking disabled and temperature 0; the GPT-5-nano judge runs with minimal reasoning effort.
The suite threshold is $\tau{=}0.80$ unless stated otherwise.
Main-study values are means over nine runs (three model pairs, seeds 42, 43, 44; $n{=}1{,}080$ per condition).
95\% bootstrap confidence intervals (1{,}000 resamples) for per-condition blindness have half-widths of at most 2.9 points.

\paragraph{Compute.}
All open-model inference runs on a single NVIDIA RTX A6000 GPU (48\,GB).
The deployed models range from 1.5B to 20B parameters.

\end{document}